
\documentclass[aps,pra,twocolumn,showkeys]{revtex4-1}
\usepackage{amsmath,paralist,amsthm,comment,amssymb}
\usepackage{times,color}

\DeclareMathOperator{\Tr}{Tr}

\newcommand{\mc}[1]{\mathcal{#1}}
\newcommand{\mbf}[1]{\mathbf{#1}}
\newcommand{\mbb}[1]{\mathbb{#1}}

\newcommand{\F}{\mathbb{F}}
\newcommand{\nix}[1]{}
\newcommand{\ket}[1]{|#1\rangle}

\newcommand{\be}{\begin{eqnarray*}}
\newcommand{\ee}{\end{eqnarray*}}
\newcommand{\ben}{\begin{eqnarray}}
\newcommand{\een}{\end{eqnarray}}

\newcommand{\ba}{\begin{array}}
\newcommand{\ea}{\end{array}}

\newtheorem{theorem}{Theorem}
\newtheorem{corollary}[theorem]{Corollary}
\newtheorem{lemma}[theorem]{Lemma}

\newtheorem{fact}{Fact}
\newtheorem*{remark}{Remark}

\newtheorem*{defn}{Definition}

\newcommand{\bra}[1]{\langle #1|}

\begin{document}


\title{ Bounds on the Information Rate of Quantum Secret Sharing Schemes}


\author{Pradeep Sarvepalli}
\email[]{pradeep@phas.ubc.ca}
\affiliation{Department of Physics and Astronomy, University of British Columbia, Vancouver, BC V6T 1Z1}

\date{July 7, 2010}

\begin{abstract}
An important metric of the performance of a quantum secret sharing scheme is its 
information rate. Beyond the fact that the information rate is upper bounded by one, 
very little is known in terms of bounds on the  information rate
of quantum secret sharing schemes. Further, not every scheme can be realized with
rate one. In this paper we derive new upper bounds for the information rates of quantum secret
sharing schemes. We show  that there exist quantum access structures on $n$ players for which 
the information rate cannot be better than  $O((\log_2 n)/n)$. 
These results are the quantum analogues of  the bounds for classical secret sharing schemes proved by Csirmaz.
\end{abstract}

\pacs{}
\keywords{quantum secret sharing,   quantum cryptography, information rate, upper bounds, entropic inequalities }

\maketitle
\section{Introduction}

Quantum secret sharing is a cryptographic protocol to distribute a secret state (either classical or quantum) among a group of 
players $P$, such that only authorized subsets of $P$ can reconstruct the secret state  from the distributed quantum states \cite{hillery99,cleve99}. 
The quantum state distributed to each
party is called a share.  The collection of authorized sets is called the access structure of the scheme.
Two important problems related to quantum secret sharing are the construction of efficient secret
sharing schemes for various access structures \cite{gottesman00,smith00} and establishing the bounds on the efficiency of secret sharing schemes \cite{gottesman00,imai04}. 
Despite the rapid growth of the field since its inception in \cite{hillery99}, see for instance  \cite{cleve99, gottesman00, smith00, karlsson99, imai04, lance04, rietjens05, markham08, keet10,benor06, gaertner02, tittel01} and the references therein,  bounds on the efficiency of quantum secret sharing schemes are hard to come by.
The main purpose of this paper is to report some progress on this problem.

The efficiency of a quantum secret sharing scheme is quantified by its information rate. Informally, this is defined as the ratio of the size of the secret to the size of the largest share, (a precise definition will be given later). 
The importance of the information rate can be understood as follows. Smaller the rate, larger are the sizes of the shares and the overhead costs of storage and communication. As the shares are to be kept secret, the security of the protocol can be undermined by large shares. For these reasons it is beneficial to design schemes with high information rate. 

A secret sharing scheme is said to be perfect if unauthorized sets cannot extract any information about the secret. In such schemes the size of the share must be at least the
size of the secret \cite{gottesman00,imai04}. Therefore the information rate of perfect quantum secret sharing  schemes is upper bounded by one. 
Given an access structure it  is not always possible to construct schemes which realize
the access structure with information rate one. Therefore, we would like to know bounds 
on the sizes of the shares for a given access structure. Both Gottesman \cite{gottesman00} 
and Smith \cite{smith00} have given constructions for general access structures. 
These constructions provide implicitly lower bounds for the information rate of quantum 
secret sharing schemes. In these constructions, the size of a share can be exponentially 
larger than the size of the secret.

At the other end we could ask what is the minimum size of share ``required'' to implement an access structure on $n$
participants. This question is difficult to answer unless we impose some restrictions on the access structures. 
We study the information rate of a specific type of access structures
and show that for these the size of a share must be at least $O(n/\log_2 n)$ larger than the size of the secret. 
These results are quantum analogues of the bounds for classical secret sharing schemes proved by Csirmaz \cite{csirmaz97}. 

In this paper we restrict our attention to perfect quantum secret sharing schemes where the secret is a quantum state. 
Bounds on quantum secret sharing schemes which share a classical secret are not dealt with
in this paper. 

\subsection{Background}

In this section we will briefly review some of the necessary background and notation used in the paper. 
We denote the participants of a  secret sharing scheme by $P$, typically assumed to be the set $P=\{1,2,\ldots,n \}$. 
The access structure of a secret sharing
is denoted by $\Gamma$. A special type of authorized set is one whose proper subsets are
unauthorized. The collection of minimal authorized sets of an access structure $\Gamma$
is denoted as $\Gamma_{\min}$. 
The notation $2^P$ is for the collection of all subsets of $P$.
We denote the complement of $A\subseteq P $ by $\bar{A}=P\setminus A$.
An access structure is said to be monotone if and only if for any $A\in \Gamma$, any set containing $A$
is also in $\Gamma$. 
Given $\Gamma$ we define the dual access structure as $\Gamma^\ast= \{ \bar{A} \mid A\not\in \Gamma\}$.
An access structure is said to be self-dual if $\Gamma=\Gamma^\ast$.

An access structure that can be realized by quantum secret sharing schemes is 
called a quantum access structure. 
Due to the no-cloning theorem \cite{wootters82,dieks82}, quantum access structures satisfy the following requirements:

\begin{fact}[\cite{cleve99}]\label{fc:authComp}
In a quantum access structure, the complement of an authorized set is unauthorized.
\end{fact}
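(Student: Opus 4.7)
The plan is to argue by contradiction using the no-cloning theorem. Suppose some set $A\subseteq P$ and its complement $\bar{A}=P\setminus A$ are both authorized in a quantum secret sharing scheme for the access structure $\Gamma$. By the definition of authorization, there exist CPTP reconstruction maps $\mc{R}_A$ and $\mc{R}_{\bar{A}}$, acting on the shares held by $A$ and $\bar{A}$ respectively, which each perfectly recover the secret onto an output register.

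First I would fix an arbitrary pure secret state $\ket{\psi}$ (or, to be safe, a purification of the secret against an external reference system $R$) and let $\mc{E}$ denote the dealer's encoding map, so that $\mc{E}(\ket{\psi}\bra{\psi})$ is the joint state of all $n$ shares. Since $A$ and $\bar{A}$ partition $P$, together they hold every share. Then the map
\be
\Lambda \;=\; (\mc{R}_A \otimes \mc{R}_{\bar{A}}) \circ \mc{E}
\ee
is a CPTP map from the secret register to two copies of it. By assumption, tracing out the $\bar{A}$ output yields $\ket{\psi}\bra{\psi}$, and tracing out the $A$ output also yields $\ket{\psi}\bra{\psi}$; since each marginal is pure, the joint state must be the product, so $\Lambda(\ket{\psi}\bra{\psi}) = \ket{\psi}\bra{\psi}^{\otimes 2}$ for every $\ket{\psi}$. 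This is precisely a universal quantum cloner, contradicting the no-cloning theorem \cite{wootters82,dieks82}, and the contradiction proves the fact.

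The main subtlety to address carefully is what ``perfectly recovers'' means: a naive reading would only guarantee correct reconstruction on some fixed input ensemble, which is not enough to invoke no-cloning. To close this gap I would insist that the recovery maps satisfy fidelity one against a purification $\ket{\Psi}_{SR}$ of the maximally mixed secret; by standard purification arguments this forces $\mc{R}_A \circ \Tr_{\bar{A}}\circ \mc{E}$ (and similarly for $\bar{A}$) to act as the identity channel on the secret register, which is exactly the property needed to conclude that $\Lambda$ is a universal cloner on arbitrary pure inputs. Everything else in the argument is routine manipulation of CPTP maps; the only real obstacle is framing ``perfect reconstruction'' strongly enough that the no-cloning contradiction applies to all input states simultaneously rather than to a single state.
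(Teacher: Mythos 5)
Your proof is correct and follows exactly the argument the paper intends: the paper states this Fact without proof, citing \cite{cleve99} and prefacing it with ``Due to the no-cloning theorem,'' and your reduction---two disjoint authorized sets would yield a universal cloner via their recovery maps---is precisely that standard argument. Your care in phrasing perfect recovery against a purification of the secret (so that the composed map is the identity channel on the secret register, not merely correct on one ensemble) is consistent with the paper's own model, which requires the recovery map to restore the pure state $\ket{RS}$ including the reference.
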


\begin{fact}[\cite{gottesman00}]\label{fc:qAcc}
An access structure is a quantum access structure if and only if  it is monotone and no two authorized sets are disjoint.
\end{fact}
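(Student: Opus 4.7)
The proof has two directions.

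\textbf{Necessity.} Monotonicity holds because a superset $B \supseteq A$ of any authorized $A \in \Gamma$ can reconstruct the secret by simply running $A$'s recovery procedure on the subset of shares held by $A$ and discarding the rest. Pairwise non-disjointness then follows from Fact~\ref{fc:authComp}: if $A, B \in \Gamma$ were disjoint, then $B \subseteq \bar{A}$, so by monotonicity $\bar{A} \in \Gamma$, contradicting that $\bar{A}$ is unauthorized. (Equivalently, two disjoint authorized sets could each independently reconstruct an arbitrary secret state, violating the no-cloning theorem.)

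\textbf{Sufficiency.} Given a monotone $\Gamma$ with no two disjoint authorized sets, I would construct a quantum secret sharing scheme realizing $\Gamma$ as follows. The key structural observation is that the hypothesis is equivalent to the inclusion $\Gamma \subseteq \Gamma^\ast$: if $A \in \Gamma$ then $A$ and $\bar{A}$ are disjoint, forcing $\bar{A} \notin \Gamma$, which means $A = \overline{\bar{A}} \in \Gamma^\ast$. Thus $\Gamma$ and $\Gamma^\ast$ are both monotone access structures with $\Gamma \subseteq \Gamma^\ast$.

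The plan is to combine two classical secret sharing schemes into one quantum scheme in a CSS-style construction. First, pick a classical secret sharing scheme $\mc{S}_X$ realizing $\Gamma$ (existence for any monotone access structure is a standard classical result of Ito--Saito--Nishizeki), and pick a classical scheme $\mc{S}_Z$ realizing $\Gamma^\ast$. Second, use these to encode the quantum secret, with $X$-basis information distributed by $\mc{S}_X$ and $Z$-basis information distributed by $\mc{S}_Z$. Any $A \in \Gamma$ can then recover the $X$-information from $\mc{S}_X$ directly and, because $A \in \Gamma^\ast$, the $Z$-information from $\mc{S}_Z$, hence the full logical state.

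The main obstacle is showing that an unauthorized $A \notin \Gamma$ obtains no information about the quantum secret. The natural route is a purification argument: one takes a purification of the encoding map and observes that $\bar{A}\in\Gamma^\ast$ (since $A\notin\Gamma$ gives $\bar{A}\in\Gamma^\ast$ by definition), so $\bar A$ carries enough information to decouple the reference system from $A$'s shares, forcing $A$'s marginal to be independent of the logical input. Making this decoupling precise --- i.e.\ verifying that classical privacy on both the $X$- and $Z$-sides upgrades to quantum privacy --- is the technical heart of the Gottesman--Smith construction, and the no-disjoint hypothesis is exactly what ensures the two classical layers can be overlaid consistently to yield a well-defined quantum code.
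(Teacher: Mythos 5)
This statement appears in the paper only as an imported result --- it is labeled a Fact and attributed to Gottesman, and the paper supplies no proof of it --- so there is no in-paper argument to compare yours against. Your necessity direction is correct and complete: monotonicity follows because a superset can run the smaller set's recovery map and discard the rest, and non-disjointness then follows from Fact~\ref{fc:authComp} (equivalently, from no-cloning directly).

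The sufficiency direction, however, has a genuine gap, and it is where all the content of the statement lives. You propose to take an arbitrary classical scheme $\mc{S}_X$ for $\Gamma$ and an arbitrary classical scheme $\mc{S}_Z$ for $\Gamma^\ast$ and ``overlay'' them CSS-style. Two independently chosen classical schemes do not in general define a quantum code: the CSS construction needs the two layers to be compatible (dual pairs of linear codes, or in Smith's version a single self-dual monotone span program from which both the $X$- and $Z$-side sharings are derived), and producing that compatible pair for an arbitrary monotone $\Gamma$ with $\Gamma\subseteq\Gamma^\ast$ is precisely the nontrivial step. You also explicitly defer the privacy argument (``the technical heart''), but that argument is the theorem --- without it one has only shown recoverability for authorized sets, which any monotone structure admits classically. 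Note also that the cited source proves sufficiency by a different route: Gottesman first passes to a maximal (self-dual) extension of $\Gamma$ and then realizes it by composing quantum threshold schemes along a monotone formula for the access function, with each threshold gate implementable as a $((k,n))$ scheme with $k>n/2$ by Cleve--Gottesman--Lo; the discarded extra shares recover the original $\Gamma$. Either that concatenation argument or the full self-dual span-program machinery would need to be supplied to close your sketch.
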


Our analysis of quantum secret sharing schemes is information theoretic. Therefore, we quickly recall some of the relevant notions 
of quantum  information. Given a density matrix $\rho$ we denote its von Neumann entropy as $S(\rho)$ which is defined as
\ben
S(\rho)&=& -\Tr(\rho \log_2 \rho).\label{eq:entropy1}
\een
Sometimes we have  to find the entropy of a set of quantum systems 
indexed by a set $A$. In such situations, we denote the
von Neumann entropy of a system $A$ as $S(A)$ where 
\ben
S(A) = -\Tr(\rho_A \log_2 \rho_A), \label{eq:entropy2}
\een
and  $\rho_A$ is the (reduced) density matrix of $A$.
In this paper unless otherwise specified entropy refers to the von Neumann entropy.
The mutual information between two systems $A$ and $B$ is denoted by $I(A:B)$
and
\ben
I(A:B) = S(A)+S(B)-S(A,B).\label{eq:mutual}
\een
A very useful inequality relating the entropy of three systems $A,B,C$ is the strong subadditivity inequality:
\ben
S(A,C)+S(B,C)\geq S(A,B,C)+S(C).\label{eq:strSubAdd}
\een
A slightly different statement of the above inequality will be useful to us. Because we
can associate the arguments of $S(\cdot)$ to an index set $V$, we can also regard 
entropy as a set function i.e.  $S:2^V\rightarrow \mbb{R}$. Then 
it becomes clear  that  entropy is a submodular function, in other words, it satisfies
\ben
S(X)+S(Y) \geq S(X\cup Y) +S(X\cap Y).\label{eq:subModularity}
\een
To show this let $C=X\cap Y$, $A=X\setminus C$ and $B=Y\setminus C$.
Then substituting for $A,B,C$ in equation~\eqref{eq:strSubAdd} we obtain
$S(X\setminus (X\cap Y), X\cap Y)+S(Y\setminus (X\cap Y),X\cap Y) \geq S(X\setminus (X\cap Y), Y\setminus (X\cap Y), X\cap Y)+S(X\cap Y)$ which simplifies to equation~\eqref{eq:subModularity}.

An information theoretic model for quantum secret sharing schemes was proposed in 
\cite{imai04}. As we make significant use of this  framework to prove our results
we provide a brief review of this model. 

We denote the Hilbert space of a system $A$ by $\mc{H}_A$. Let $S$ be the secret to
be distributed and let $\mc{H}_S$ be the associated Hilbert space. We assume that 
$\rho_S= \sum_{i\in \F_q} \alpha_i\ket{i}\bra{i}$.  The
secret is purified using a reference system $R$ to give a pure state $\ket{RS}$. Then 
\ben
S(R)=S(S)\label{eq:entropyRef}
\een
Assume that the set of players is given by $P=\{1,\ldots, n \}$.
A distribution of shares for the quantum secret sharing scheme is defined to be a completely positive trace preserving map $\Delta$
\ben
\Delta:\mc{H}_{RS}\rightarrow \mc{H}_R \otimes \mc{H}_P,
\een
such that for every authorized set $A\subseteq P$ there exists a recovery map $\mbf{1}_R\otimes \mc{R}_A:\mc{H}_R\otimes \mc{H}_{A}\rightarrow \mc{H}_{RS}$ that 
maps  $\rho_{RA} \mapsto \ket{RS}\bra{RS}$.
The information rate of the quantum secret sharing scheme is defined as 
\ben
\kappa = \frac{S(S)}{\max_i S(i)}.\label{eq:infoRate}
\een

The information theoretic model provides the following 
characterization of quantum secret sharing schemes \cite{imai04}: 

\begin{lemma}\label{lm:qssCond}
In a perfect quantum secret sharing scheme with access structure $\Gamma$ we must have
\ben
I(A:R)&=&I(A:S) = 2S(S)\mbox{ for all } A\in \Gamma \label{eq:qssCond1}\\
I(A:R)&=&I(A:S)=0 \mbox{ for all } A\not\in \Gamma \label{eq:qssCond2}
\een
\end{lemma}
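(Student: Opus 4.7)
The plan is to combine the purification structure of $\ket{RS}$ with the fact that any recovery map on an authorized set $A$ decouples its complement $\bar A$ from the reference $R$. Dilating $\Delta$ to an isometry on $\mathcal{H}_S$ (the standard convention in this framework), the post-distribution state $\ket{RP}$ on the reference and all players is pure, so the subsystem-entropy identities $S(RA)=S(\bar A)$, $S(R\bar A)=S(A)$, and $S(R)=S(S)$ all follow from global purity together with equation~\eqref{eq:entropyRef}. Moreover, since $\rho_S$ is diagonal, the purification $\ket{RS}=\sum_i\sqrt{\alpha_i}\ket{i}_R\ket{i}_S$ makes $R$ and $S$ perfect classical copies of one another; as $\Delta$ processes only $S$, the identification $I(A:R)=I(A:S)$ holds throughout, and it suffices to pin down $I(A:R)$ in the two cases.

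For an authorized set $A\in\Gamma$, the heart of the argument is to extract a decoupling condition from the recovery map. I would dilate $\mathcal{R}_A$ to a unitary on $A$ plus an ancilla, apply it to the globally pure $\ket{RP}$, and observe that the resulting pure state has $\ket{RS}\bra{RS}$ as its $RS$-marginal. The standard fact that a pure marginal of a pure state forces tensor-product factorization across that marginal then implies $R$ is uncorrelated with everything outside $S$, in particular with $\bar A$. Since the dilated recovery touches only $A$ and its ancilla, the $R\bar A$-marginal was already in product form to begin with, yielding $I(R:\bar A)=0$. A brief calculation using the purity identities then converts this into $S(A)-S(\bar A)=S(S)$ and finally $I(A:R)=2S(S)$.

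For an unauthorized set $A\notin\Gamma$, perfectness is by definition the requirement that $A$'s reduced state carries no information about the secret, which in the purification picture is exactly $\rho_{RA}=\rho_R\otimes\rho_A$; hence $I(A:R)=0$. The main obstacle is the decoupling step in the authorized case, i.e., the passage from ``a recovery map exists on $A$'' to ``$\bar A$ is product with $R$''; once that is in hand, the rest is routine bookkeeping with the global-purity subsystem identities.
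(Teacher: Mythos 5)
First, a point of comparison: the paper does not actually prove this lemma; it is quoted from Imai et al.~\cite{imai04} as part of the information-theoretic model, so there is no in-paper argument to measure yours against, and I can only assess your proof on its own terms.

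Your outline is essentially workable, but it has one genuine gap: you assert that ``the post-distribution state $\ket{RP}$ on the reference and all players is pure.'' The map $\Delta$ is only assumed to be completely positive and trace preserving, so its Stinespring dilation introduces an environment $E$, and only $\ket{RPE}$ is pure; $\rho_{RP}$ itself need not be. Every subsystem identity you invoke ($S(R,A)=S(\bar{A})$, $S(R,\bar{A})=S(A)$) depends on that purity and fails without it, and nothing in the model lets you assume a pure-state scheme here (the paper's purification device applies to access structures, by adding a share, not to an arbitrary given realization). The repair is to carry $E$ along as an extra ``share'': your decoupling step then yields $\rho_{R\bar{A}E}=\rho_R\otimes\rho_{\bar{A}E}$ (the dilated recovery unitary acts only on $A$ and its ancilla, and the post-recovery global pure state factorizes across its pure $RS$-marginal), whence $S(A,R)=S(\bar{A},E)=S(R,\bar{A},E)-S(R)=S(A)-S(S)$ and $I(A:R)=2S(S)$. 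Alternatively, and more economically, the authorized case needs no purification at all: $\mathbf{1}_R\otimes\mathcal{R}_A$ maps $\rho_{RA}$ to $\ket{RS}\bra{RS}$, and since quantum mutual information is monotone under local CPTP maps, $I(A:R)\geq I(S:R)=2S(S)$, while the Araki--Lieb bound gives $I(A:R)\leq 2S(R)=2S(S)$. Your handling of the unauthorized case (secrecy as the content of perfectness) is fine, and the identification $I(A:R)=I(A:S)$ is really a matter of how \cite{imai04} defines $I(A:S)$ rather than something to be derived.
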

The first condition has been called the recoverability requirement while the second has
been termed secrecy requirement. We also need the following lemma, although an immediate consequence 
of Lemma~\ref{lm:qssCond},  we include it for completeness.
\begin{lemma}\label{lm:abarReln}
Let  $A\in \Gamma$. Then we have the following relations.
\ben
S(A)&=& \frac{1}{2}I(A:\bar{A})+S(S)\\
S(\bar{A})&=& \frac{1}{2}I(A:\bar{A})\\
S(A)-S(\bar{A})&=&S(S)
\een
\end{lemma}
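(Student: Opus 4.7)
The plan is to combine two ingredients: the mutual-information characterization in Lemma~\ref{lm:qssCond} applied to the complement $\bar A$, and the standing assumption that the distributed state on $RP$ is pure (the distribution map being dilated to an isometry if necessary), so that complementary bipartitions of $RP$ carry equal von Neumann entropy.

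First I would invoke Fact~\ref{fc:authComp}: since $A\in\Gamma$, its complement $\bar A=P\setminus A$ lies outside $\Gamma$. Applying the secrecy clause \eqref{eq:qssCond2} of Lemma~\ref{lm:qssCond} to $\bar A$ gives $I(\bar A:R)=0$, which by \eqref{eq:entropyRef} rewrites as
\[
S(\bar A,R)=S(\bar A)+S(R)=S(\bar A)+S(S).
\]
Purity of $\rho_{RP}$ applied to the bipartition $A$ versus $R\cup\bar A$ yields $S(A)=S(R,\bar A)$, and combining the two displays gives the third identity $S(A)-S(\bar A)=S(S)$ immediately.

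For the remaining two relations I would expand
\[
I(A:\bar A)=S(A)+S(\bar A)-S(P)=S(A)+S(\bar A)-S(S),
\]
where purity was used once more to equate $S(P)=S(R)=S(S)$. Substituting $S(A)=S(\bar A)+S(S)$ collapses the right-hand side to $2S(\bar A)$, which is the second identity; adding $S(S)$ to both sides gives the first.

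The only conceptual subtlety is the appeal to purity of the global state on $RP$; once that is in place everything else is routine bookkeeping with mutual informations, which is consistent with the author's description of the lemma as ``an immediate consequence'' of Lemma~\ref{lm:qssCond}. If one were uneasy about purity, the same argument runs after dilating $\Delta$ to an isometry and verifying that this dilation does not affect the mutual-information identities on which Lemma~\ref{lm:qssCond} rests.
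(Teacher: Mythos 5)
Your proof is correct and takes essentially the same route as the paper: both arguments rest on Lemma~\ref{lm:qssCond} combined with purity of the joint state on $R\cup P$ (which the paper uses implicitly in writing $S(A,\bar A)=S(R)$ and $S(\bar A)=S(A,R)$, and which you rightly flag as the one point needing a dilation of $\Delta$). The only cosmetic difference is that you invoke the secrecy clause $I(\bar A:R)=0$ on the complement and obtain the third identity first, whereas the paper invokes the recoverability clause $I(A:R)=2S(S)$ on $A$ itself and derives the first identity first.
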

\begin{proof}
If $A\in \Gamma$, then $\bar{A}\not\in \Gamma$. Computing $I(A:\bar{A})$ we obtain
\ben
I(A:\bar{A}) &=& S(A)+S(\bar{A})-S(A,\bar{A})\label{eq:mut1}\\
&=& S(A)+S(\bar{A})-S(R)\\\nonumber
&=& S(A)+S(A, R ) -S(S)\\\nonumber
&=& S(A)+S(A)+S(R)-I(A:R)-S(S)\\ \nonumber
&=& 2S(A)-2S(S),\nonumber
\een
which gives us the first relation. Substituting for $S(A)$ in equation~\eqref{eq:mut1}
gives us the second relation. 
\be
I(A:\bar{A}) = I(A:\bar{A})/2+S(S) +S(\bar{A}) -S(S)
\ee
The last relation is immediate from the previous two equations.
\end{proof}

\section{Bounds on the Information Rate}
In this section we derive lower bounds on the size of shares (in terms of the von Neumann entropy) and thus compute upper bounds on the information rate. The main result of this
section is a lower bound on the size of shares for a class of access structures.
A similar result was shown for classical secret sharing schemes by Csirmaz \cite{csirmaz97}. 

\begin{lemma}\label{lm:jointEntropy}
Given a perfect quantum secret sharing scheme with access structure $\Gamma$,
for any  $A\subseteq P$, one of the following must hold:
\ben
S(A,R)&=& S(A)-S(S) \mbox{ if } A\in \Gamma\\
S(A,R)&=& S(A)+S(S) \mbox{ if } A\not\in \Gamma
\een
\end{lemma}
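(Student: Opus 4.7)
The plan is to derive both identities directly from the definition of mutual information, the fact that $S(R)=S(S)$ (equation~\eqref{eq:entropyRef}), and the two cases of the recoverability/secrecy conditions supplied by Lemma~\ref{lm:qssCond}. In other words, once we write
\[
I(A:R) = S(A)+S(R)-S(A,R),
\]
there is essentially no content left beyond rearranging for $S(A,R)$ in each case.

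First I would write out $I(A:R)$ using its definition, and immediately replace $S(R)$ by $S(S)$ using equation~\eqref{eq:entropyRef}, so that
\[
I(A:R) = S(A)+S(S)-S(A,R).
\]
Then I would split into the two cases prescribed by Lemma~\ref{lm:qssCond}. If $A\in\Gamma$, the recoverability requirement gives $I(A:R)=2S(S)$, and solving the above equation for $S(A,R)$ yields $S(A,R)=S(A)-S(S)$. If $A\notin\Gamma$, the secrecy requirement gives $I(A:R)=0$, and the same equation yields $S(A,R)=S(A)+S(S)$.

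There is no real obstacle here: this is a short algebraic consequence of already-established facts. The only thing worth being careful about is keeping track of the sign conventions and checking that the statement is consistent with Lemma~\ref{lm:abarReln} (where substituting $\bar{A}$ for $A$ and using $S(A,\bar{A})=S(R)=S(S)$ in the authorized case recovers the same relation). So I expect the proof to occupy just two or three lines in the final write-up.
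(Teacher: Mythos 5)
Your proposal is correct and follows essentially the same route as the paper: rearrange the definition of mutual information for $S(A,R)$, substitute $S(R)=S(S)$ from equation~\eqref{eq:entropyRef}, and plug in $I(A:R)=2S(S)$ or $I(A:R)=0$ according to Lemma~\ref{lm:qssCond}. No gaps.
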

\begin{proof}
For any subset $A$, using the definition of mutual information in equation~\eqref{eq:mutual}
\ben
S(A,R)&=& S(A)+S(R)-I(A:R)\nonumber\\
S(A,R)&=& S(A)+S(S)-I(A:R),\label{eq:jointEntropy}
\een
where we have used equation~\eqref{eq:entropyRef} in the last step.
By Lemma~\ref{lm:qssCond}, we know that $I(A:R)=2S(S)$ if $A\in \Gamma$ and $I(A:R)=0$
otherwise. Substituting these values in equation~\eqref{eq:jointEntropy} we get the stated result. 
\end{proof}

\begin{lemma}\label{lm:subAddAuthSets}
Let $A, B $ be any two authorized sets such that $A\cap B $ is not authorized. Then we have the following inequality:
\ben
S(A)+S(B)\geq S(A\cup B) +S(A\cap B)+2S(S).\label{eq:subAddAuthSets}
\een
\end{lemma}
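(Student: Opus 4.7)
The plan is to apply submodularity of the von Neumann entropy (equation~\eqref{eq:subModularity}) to the sets $A \cup \{R\}$ and $B \cup \{R\}$, treating $R$ as an additional index alongside the players in $P$. Note that $(A \cup \{R\}) \cup (B \cup \{R\}) = (A \cup B) \cup \{R\}$ and $(A \cup \{R\}) \cap (B \cup \{R\}) = (A \cap B) \cup \{R\}$, so submodularity yields
\begin{equation*}
S(A,R) + S(B,R) \;\geq\; S(A \cup B, R) + S(A \cap B, R).
\end{equation*}

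Next I would use Lemma~\ref{lm:jointEntropy} to eliminate $R$ from each of the four terms. Since $A, B \in \Gamma$, the left-hand side becomes $[S(A) - S(S)] + [S(B) - S(S)]$. For the right-hand side I need to classify $A \cup B$ and $A \cap B$: by monotonicity of $\Gamma$ (Fact~\ref{fc:qAcc}) we have $A \cup B \in \Gamma$, contributing $S(A \cup B) - S(S)$, whereas by hypothesis $A \cap B \notin \Gamma$, contributing $S(A \cap B) + S(S)$. Substituting these four expressions and rearranging, the $\pm S(S)$ terms combine to produce $+2S(S)$ on the right-hand side, giving exactly equation~\eqref{eq:subAddAuthSets}.

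There is essentially no obstacle here: the entire argument rests on choosing the right pair of sets to feed into submodularity (augmenting both $A$ and $B$ with the reference $R$) and then reading off the signs of $S(S)$ from Lemma~\ref{lm:jointEntropy} according to whether each of $A$, $B$, $A \cup B$, $A \cap B$ is authorized. The only subtle point to flag is the invocation of monotonicity to conclude $A \cup B \in \Gamma$; everything else is bookkeeping. This also explains why the hypothesis that $A \cap B$ be \emph{unauthorized} is essential---if it were authorized, the $+S(S)$ on the right would instead be $-S(S)$, and the extra $2S(S)$ gain would disappear, recovering only ordinary submodularity.
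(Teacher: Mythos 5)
Your proposal is correct and is essentially the paper's own argument: both augment $A$ and $B$ with the reference system $R$, apply strong subadditivity (submodularity) to $S(A,R)+S(B,R)$, and then use Lemma~\ref{lm:jointEntropy} together with monotonicity (so $A\cup B\in\Gamma$) and the hypothesis $A\cap B\notin\Gamma$ to convert each term and collect the $\pm S(S)$ contributions into the $+2S(S)$ gain. The only difference is cosmetic ordering — the paper first rewrites $S(A)+S(B)$ via Lemma~\ref{lm:jointEntropy} and then applies submodularity, whereas you apply submodularity first.
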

\begin{proof}
By using Lemma~\ref{lm:jointEntropy}, we can write $S(A)+S(B)$ as
\be
S(A)+S(B)&=& S(A\cup R)+S(B\cup R) + 2S(S)\\
&\geq &S(A\cup B \cup R)+S((A\cap B)\cup R)+2S(S)
\ee
where we used the subadditivity inequality in the last equation.
Once again using Lemma~\ref{lm:jointEntropy} and simplifying, we obtain
\be
S(A)+S(B)&\geq &S(A\cup B)
+S((A\cap B))+2S(S)
\ee
as stated. 
\end{proof}

\subsection{The General Case with $n$ Parties}
\begin{defn}[Csirmaz Access Structure]
Given an integer $n\geq 4$,  let $k$ be the largest integer such that $n\geq 2^k-2+k$.
Let $A\subset P $ be such that $|A|=k$. Consider all the $2^k$ subsets of $A$ enumerated in the order of decreasing cardinality. Let $B=P\setminus A=\{b_1,\ldots, b_{n-k} \}$. Let
$B_0=\emptyset $, $B_{2^k-2}=B$  and $B_i=\{b_1,\ldots,b_i \}$ for $0<i<2^{k}-2$.
Let $\Gamma_{\min}^{(n)}$ be the minimal access structure given as below:
\ben
\Gamma_{\min}^{(n)} =\{ A_i\cup B_i \mid 0\leq i <2^k-1\}.\label{eq:csirmazAcc}
\een
\end{defn}
We emphasize that in these access structures, we have $A_i\cap B_j=\emptyset$, $A_i\not\subset A_j$ if $i<j$ and 
$B_i\not\subset B_j$ if $i>j$.
As an aside we note that the access structure defined above departs from the one originally proposed by Csirmaz \cite{csirmaz97}
in the definition of the minimal authorized set $B_{2^k-2}$ and secondly, it is always connected i.e., every 
party occurs in some minimal authorized set.
It is easy to see that this is a minimal access structure. If 
it is not, then for some  distinct $i,j$ we must have  $A_i\cup B_i  \subsetneq A_j \cup B_j$, then it follows that 
$A_i \subseteq A_j$ and $B_i\subseteq B_j$. This implies $i\geq j$ and $i\leq j$, which is impossible for $i\neq j$. 
\begin{lemma}
The access structure $\Gamma_{\min}^{(n)}$, defined in equation~\eqref{eq:csirmazAcc},  is a quantum access structure. 
\end{lemma}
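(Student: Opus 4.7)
The plan is to apply Fact~\ref{fc:qAcc}. By construction the access structure associated with $\Gamma_{\min}^{(n)}$ is the upward closure of these minimal sets, so monotonicity is automatic, and the only substantive task is to show that no two authorized sets are disjoint. Since every authorized set contains a minimal authorized set, and two sets overlap whenever they contain overlapping subsets, it suffices to prove that any two distinct members of $\Gamma_{\min}^{(n)}$ have nonempty intersection.

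To do this I would fix distinct indices $i,j$ with $0\le i,j\le 2^k-2$ and first use $A\cap B=\emptyset$ to rewrite $(A_i\cup B_i)\cap(A_j\cup B_j)=(A_i\cap A_j)\cup(B_i\cap B_j)$, which already strips the analysis down to the $A$-part and the $B$-part separately. I would then split into two cases. If both $i,j\ge 1$, the explicit description of the $B_i$'s immediately gives $b_1\in B_i\cap B_j$, since $B_i\ni b_1$ for every $i\ge 1$ (including $i=2^k-2$, where $B_i=B$). If instead one of the indices, say $i$, equals $0$, then $B_0=\emptyset$, so I must instead show $A_0\cap A_j\ne\emptyset$; here $A_0$ is the full $k$-element set $A$ sitting at the top of the decreasing-cardinality enumeration, while $A_j$ is nonempty because the restriction $j\le 2^k-2$ excludes precisely the index $2^k-1$ assigned to the empty subset of $A$, so $A_0\cap A_j=A_j\ne\emptyset$.

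There is really no hard step in this argument; the only thing that requires care is the indexing convention. In particular, one must recognize that the empty subset of $A$ is deliberately omitted from the enumeration used to define $\Gamma_{\min}^{(n)}$, and that every $B_i$ with $i\ge 1$ contains the distinguished element $b_1$ (the definition of $B_{2^k-2}=B$ as a separate case is what could trip up an unwary reader, but $B\supseteq\{b_1\}$ so the point still goes through). With these two observations the two cases above exhaust all pairs $i\ne j$, and Fact~\ref{fc:qAcc} then yields the claim.
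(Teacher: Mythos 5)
Your proposal is correct and follows essentially the same route as the paper: both reduce to checking pairwise intersection of the minimal sets via Fact~\ref{fc:qAcc}, decompose $(A_i\cup B_i)\cap(A_j\cup B_j)=(A_i\cap A_j)\cup(B_i\cap B_j)$ using $A\cap B=\emptyset$, and split into the case where one index is $0$ (where $A_0=A\supseteq A_j\neq\emptyset$ does the work) and the case where both are positive (where the $B$-parts intersect). The only cosmetic difference is that you exhibit $b_1$ as a common element whereas the paper uses the nesting $B_i\subsetneq B_j$; both are immediate from the definition.
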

\begin{proof}
The  elements of  the minimal access structure $\Gamma_{\min}^{(n)}$ 
are given by 
\ben
M_i =A_i \cup B_i \mbox{ for } 0\leq i<2^k-1
\een
Consider $M_i \cap M_j$ for any $i\neq j$. Then because $A_k \cap B_l =\emptyset $ 
for any $k,l$ we have
\be
M_i \cap M_j = (A_i \cap A_j) \cup (B_i \cap B_j).
\ee
Without loss of generality we can assume that $i<j$. If $i=0$, then 
$B_0=\emptyset$ and $A_0=A\supsetneq A_j$ for all $j$. Therefore 
$M_0\cap M_j = A_j\neq \emptyset$. If $i\neq 0$, then we have $B_i\neq \emptyset$
and by construction $B_i\subsetneq B_j$. Thus we have $M_i \cap M_j \supseteq B_i\neq \emptyset$. Thus we always have $M_i\cap M_j\neq \emptyset$. Thus by Fact~\ref{fc:qAcc}$, \Gamma_{\min}^{(n)}$ is a quantum access structure.
\end{proof}

It turns out that the access structure $\Gamma^{(n)}$ is not self-dual. 
For instance, the set $A_i\cup B_{i-1}$ for $i>1$ is unauthorized as well its complement 
$(A\setminus A_i) \cup (B\setminus B_{i-1})$ is also unauthorized. 
For a technical reason in the computation of the
entropies, it is much more convenient to work with self-dual access structures. For this reason we need the following notion 
introduced by Gottesman \cite{gottesman00}.

\paragraph*{Purification of Access Structures:}
An access structure $\Gamma$ on a set of players $P=\{1,2,\ldots, n \}$ that is not self-dual can be ``purified'' to give a self-dual access structure $\overline{\Gamma}$ on a set of  players $P'=\{ 1,2, \ldots,n, n+1\}$ by 
\begin{compactenum}[i)]
\item Adding an additional party, $n+1$.
\item Additional authorized sets, created from unauthorized sets of original structure and having the following
form:
$$
\{A\cup \{ n+1\}  \mid A\not\in \Gamma; P\setminus A \not\in \Gamma \}.
$$
\end{compactenum}

The additional authorized sets all contain the party $\{n+1\}$. These sets are created from a pair sets $A$ and $P\setminus A$ such that both are not in $\Gamma$, i.e., both are unauthorized. Purification converts one of them to an authorized set in $\overline{\Gamma}$ 
along with additional party. Thus $A\cup \{ n+1\} \in \overline{\Gamma}$ but $P \setminus A \not\in \overline{\Gamma}$. 
\begin{remark}
It is important to note that all the authorized sets of $\Gamma$ are also authorized sets of 
its purification $\overline{\Gamma}$ and all the unauthorized sets of $\Gamma$ are unauthorized in 
$\overline{\Gamma}$.
\end{remark}

Purification of access structures was introduced by Gottesman in \cite{gottesman00}. Purification of access structures is useful
in that it gives an access structure that has a  pure-state secret sharing scheme and the scheme for the original access structure
can be obtained by discarding the share that has been added. We recover the original access structure by discarding the share
associated with $\{ n+1\}$.

The following theorem closely follows the structure of 
\cite[Theorem~3.2,~Lemma~3.3--3.4]{csirmaz97}, but note that the results therein do not 
apply in the quantum setting. Firstly, because the model of quantum secret sharing 
schemes is different and secondly, despite the similarity of the access structures, 
the proofs in \cite{csirmaz97} invoke the use a (polymatroidal) function which is assumed 
to be  monotonic and submodular. The von Neumann entropy, as is well-known, does
not satisfy the assumption of monotonicity.

\begin{theorem}\label{th:shareSize}
Let $\overline{\Gamma}^{(n-1)}$ be a purification of an access structure defined as in 
equation~\eqref{eq:csirmazAcc}. Then in any realization of $\overline{\Gamma}^{(n-1)}$ there exists some share $i$ for which 
$S(i)/S(S) \geq ((2^{k+1}-1)/(2k+1))=O(n/\log_2 n)$.
\end{theorem}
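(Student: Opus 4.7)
The goal is an inequality of the form $S(T) \geq (2^{k+1}-1)\,S(S)$ for some set $T \subseteq P'$ of size $2k+1$; subadditivity $S(T) \leq \sum_{p\in T} S(p) \leq |T|\max_p S(p)$ then yields the bound on $\max_p S(p)$ stated in the theorem. The main tools are Lemma~\ref{lm:subAddAuthSets}, which sharpens subadditivity by $+2S(S)$ whenever two authorized sets have unauthorized intersection, and Lemma~\ref{lm:abarReln}, which gives the self-duality relation $S(A)-S(\bar A)=S(S)$ for authorized $A$. Passing to the purification $\overline{\Gamma}^{(n-1)}$ is what makes this latter identity available on \emph{every} subset, allowing self-duality to substitute for the monotonicity of the entropy that Csirmaz's classical argument uses and that the von Neumann entropy lacks.

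The heart of the proof iterates Lemma~\ref{lm:subAddAuthSets} along the sequence of minimal authorized sets $M_i = A_i \cup B_i$, $0 \leq i \leq 2^k - 2$, carried over from the unpurified Csirmaz access structure. Setting $V_i = \bigcup_{j\leq i} M_j$, I first verify that $V_{i-1} \cap M_i$ is unauthorized for each $i \geq 1$: any minimal authorized set contained in $V_{i-1} \cap M_i \subseteq M_i$ would, by minimality, have to equal $M_i$, yet $M_i \not\subseteq V_{i-1}$ because $B_i \not\subseteq B_{i-1}$. Hence Lemma~\ref{lm:subAddAuthSets} applies to every pair $(V_{i-1}, M_i)$ and contributes a free $+2S(S)$ at each of the $2^k-2$ steps. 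Telescoping the resulting inequalities produces a linear-in-$2^k$ lower bound on a combination of entropies built from the $M_i$'s and residual intersection terms $S(V_{i-1} \cap M_i)$.

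The decisive reorganization is then to use Lemma~\ref{lm:abarReln} to dualize every authorized set into its (small) complement, so that the accumulated entropies consolidate into $S(T)$ for a specific $T \subseteq P'$ of cardinality $2k+1$, with the constants conspiring to give exactly $(2^{k+1}-1)\,S(S)$ on the right. Subadditivity then closes the argument. The main obstacle I anticipate is precisely this bookkeeping step: because we do not have monotonicity, the residual intersection entropies produced by the telescope cannot be dropped as they are in Csirmaz's classical proof, and each must either cancel via a self-duality swap or be absorbed by a further subadditivity bound. The combinatorial properties $A_i \not\subset A_j$ and $B_i \subsetneq B_j$ of the Csirmaz access structure must be used delicately to guarantee that this is possible and that the final right-hand side has the stated magnitude $(2^{k+1}-1)\,S(S)$.
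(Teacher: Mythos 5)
Your setup is essentially the paper's: since $A_0=A$, your $V_{i-1}=\bigcup_{j\leq i-1}M_j$ equals $A\cup B_{i-1}$, so your pairs $(V_{i-1},M_i)$ are exactly the pairs $(X_{i-1},Y_{i-1})=(A\cup B_{i-1},\,A_{i+1-1}\cup B_{i+1-1})$ to which the paper applies Lemma~\ref{lm:subAddAuthSets}, and your verification that the intersection $A_i\cup B_{i-1}$ is unauthorized is sound. But the proof stops at precisely the point you yourself flag as ``the main obstacle.'' After summing Lemma~\ref{lm:subAddAuthSets} over $i$ you are left with the residual terms $\sum_i S(A_i\cup B_{i-1})$ on the right and $\sum_i S(M_i)$ on the left, and these do not telescope by themselves. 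The missing idea is a second application of submodularity to the pair $A_{i+1}\cup B_i$ and $B_{i+1}$, namely $S(A_{i+1}\cup B_i)+S(B_{i+1})\geq S(A_{i+1}\cup B_{i+1})+S(B_i)$, which converts each residual intersection term into $S(M_{i+1})$ (cancelling the authorized-set term) plus a difference $S(B_i)-S(B_{i+1})$ that telescopes to $S(B_0)-S(B_{2^k-2})=-S(B)$. Asserting that the bookkeeping ``must be possible'' by delicate use of $A_i\not\subset A_j$ and $B_i\subsetneq B_j$ is not a proof; this one-line inequality is the heart of the argument and has to be exhibited.

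The closing step is also not what you describe, and your stated target is not achievable as phrased: there is no set $T\subseteq P'$ of $2k+1$ players here (only the $k$ players of $A$ plus player $n$ are in play), so you cannot hope for $S(T)\geq(2^{k+1}-1)S(S)$ with $|T|=2k+1$. The telescoped inequality gives $S(A)+S(B)\geq S(A\cup B)+(2^k-2)\,2S(S)$, and what is actually needed is (i) the lower bound $S(A\cup B)\geq 2S(S)$, obtained from Fact~\ref{fc:authComp}, Lemma~\ref{lm:abarReln} and $S(Z)\geq S(S)$ for unauthorized $Z$, and (ii) the conversion $S(B)=S(A,R,n)\leq S(A,R)+S(R,n)-S(R)=S(A)+S(n)-S(S)$, which uses purity of the global state, strong subadditivity, and Lemma~\ref{lm:jointEntropy}. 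This yields the \emph{weighted} bound $2S(A)+S(n)\geq(2^{k+1}-1)S(S)$, whence $2\sum_{i\in A}S(i)+S(n)\geq(2^{k+1}-1)S(S)$ and the denominator $2k+1$ arises as a total weight, not a cardinality. Neither (i) nor (ii) appears in your proposal, so as it stands the argument does not reach the stated constant.
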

\begin{proof}
We assume that $\Gamma^{(n-1)}$ is defined on $P=\{1,2,\ldots,n-1 \}$ and its purification $\overline{\Gamma}^{(n-1)}$ is defined on 
$P'=\{1,2,\ldots, n-1, n \}$
The idea behind the proof is to obtain a lower bound on the entropy of a subset of $P'$.
Our aim will be to get a lower bound on the entropy of a subset $A\subseteq \{ 1,\ldots, n-1,n\}$ such as $S(A)\geq \alpha S(S)$. Then using the fact that $S(A)\leq S(i) |A|$
for some $i\in P'$, we can lower bound the size of the $i$th share as $\alpha S(S)/|A|$.

Let $ X_i= B_{i}\cup A$ and $Y_i=  A_{i+1}\cup B_{i+1}$. First observe that $X_i, Y_i$ are both
authorized sets as $X_i\supset M_{i}$ and $Y_i\supset M_{i+1}$. Furthermore, $X_i\cap Y_i = A_{i+1}\cup B_i $ is unauthorized. Suppose on the contrary that 
$X_i\cap Y_i$ is authorized; then for some $j$ we must have $M_j=A_j\cup B_j \subseteq A_{i+1}\cup B_i$. 
Since $A_l\cap B_m=\emptyset$ for all choice of $l$ and $m$, this
implies that  $A_j \subseteq A_{i+1}$ and $B_j\subseteq B_i$, 
which can only hold if  $ j \geq i+1$ and $j\leq i $  giving us a contradiction.
Therefore $X_i\cap Y_i $ is unauthorized and we can apply  Lemma~\ref{lm:subAddAuthSets} which
gives us 
\be
S(X_i)+S(Y_i) &\geq &S(X_i\cup Y_i) + S(X_i\cap Y_i) + 2S(S).
\ee
This can be rewritten as
\be
S(A\cup B_i)+S(A_{i+1}\cup B_{i+1})&\geq& S(A\cup B_{i+1})+2S(S)\\
&+&S(A_{i+1}\cup B_i).
\ee
Now we apply submodularity inequality to the sets $ A_{i+1}\cup B_i$ and $B_{i+1}$:
\be
S( A_{i+1}\cup B_i)+S(B_{i+1}) \geq S( A_{i+1}\cup B_{i+1})+S(B_i).
\ee
Adding the previous two inequalities we obtain
\be
S(A\cup B_{i})+S(B_{i+1})\geq  S(A\cup B_{i+1})+S(B_i)+2S(S)
\ee
Clearly, this inequality holds for all $0\leq i<2^k-2$. Adding them all up we obtain
\be
\sum_{i=0}^{2^k-3}S(A\cup B_{i})+S(B_{i+1})&\geq& \sum_{i=0}^{2^k-3} S(A\cup B_{i+1})+S(B_i)\\
&+&(2^{k}-2)2S(S)\\
S(A\cup B_0)+S(B_{2^k-2}) &\geq& S(A\cup B_{2^k-2})+S(B_0)\\
&+&(2^k-2)2S(S)
\ee
As $B_0=\emptyset$, this reduces to 
\be
S(A)+S(B_{2^k-2}) &\geq& S(A\cup B_{2^k-2})+(2^k-2)2S(S)
\ee
Since $A\cup B_{2^k-2}$ is authorized its complement is unauthorized by Fact~\ref{fc:authComp} and by Lemma~\ref{lm:abarReln}, $S(A\cup B_{2^k-2}) = S(\overline{A\cup B_{2^k-2}})+S(S) \geq 2S(S)$,
where we also used the fact that  $S(Z)\geq S(S)$ for any unauthorized set $Z$, 
\cite[Theorem~6]{imai04}, see also \cite[Theorem~4]{gottesman00}. 
Therefore we now have 
\be
S(A)+S(B_{2^k-2}) \geq (2^k-1)2S(S)
\ee
By Lemma~\ref{lm:abarReln}, $S(B_{2^k-2})=S(\overline{B}_{2^k-2})=S(A,R, n)\leq S(A,R)+S(R,n)-S(S) = S(A)+S(n)-S(S)$,  therefore we obtain
\be
2S(A)+S(n)-S(S) &\geq& (2^k-1)2S(S)\\
2S(A)+S(n)&\geq& (2^{k+1}-1)S(S)
\ee
But $S(A)\leq \sum_{i=1}^{k} S(i)$, hence 
\be
S(n)+\sum_{i=1}^k 2S(i)  \geq ({2^{k+1}-1})S(S).
\ee
Hence for some $1\leq i\leq k$ or $i=n$, we must have 
\be
S(i) \geq \frac{2^{k+1}-1}{2k+1}S(S)
\ee
Since $\Gamma^{(n-1)}$ has $n-1$ participants with
$n-1\geq 2^k+k-2$, we see that some share is at least as large as 
$O(n/\log_2 n)$ the size of the secret. 
\end{proof}

\begin{corollary}
For all $n \geq 4$, there exist quantum secret sharing schemes with information rate upper bounded by $O((\log_2 n)/n)$.
\end{corollary}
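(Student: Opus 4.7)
The plan is straightforward: Theorem~\ref{th:shareSize} does all the real work, and the Corollary merely re-expresses the share-size lower bound it provides as an upper bound on the information rate $\kappa$.

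First I would note that the purified Csirmaz-type access structure $\overline{\Gamma}^{(n-1)}$ is well-defined on $n$ players whenever $n-1 \geq 4$, and that it admits at least one perfect quantum secret sharing scheme: the underlying $\Gamma^{(n-1)}$ is a quantum access structure (by the lemma proved just before the theorem), so the general constructions of Gottesman \cite{gottesman00} and Smith \cite{smith00} supply an explicit realizing scheme, and purification preserves realizability as a pure-state scheme. This is what guarantees \emph{existence} of a scheme whose rate we can then upper-bound.

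Second, I would plug Theorem~\ref{th:shareSize} directly into the definition of the information rate. The theorem asserts the existence of a share $i$ with $S(i) \geq \tfrac{2^{k+1}-1}{2k+1}\, S(S)$, where $k$ is the largest integer satisfying $n-1 \geq 2^k + k - 2$. Combined with equation~\eqref{eq:infoRate}, this yields
\[
\kappa \;=\; \frac{S(S)}{\max_j S(j)} \;\leq\; \frac{S(S)}{S(i)} \;\leq\; \frac{2k+1}{2^{k+1}-1}.
\]

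Third, I would convert this into the advertised $O((\log_2 n)/n)$ bound by elementary asymptotics: the defining inequality for $k$ forces $k = \Theta(\log_2 n)$ and $2^{k+1} = \Theta(n)$, so $(2k+1)/(2^{k+1}-1) = O((\log_2 n)/n)$. The only bookkeeping is the boundary value $n=4$, which is not strictly covered by the theorem (whose construction requires $n-1 \geq 4$); this can be absorbed into the constant hidden in the $O(\cdot)$, or handled by direct inspection of any scheme on four parties. There is no genuine obstacle here — the Corollary is essentially a renaming of Theorem~\ref{th:shareSize} in the language of information rates.
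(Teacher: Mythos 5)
Your proposal is correct and matches the paper's (implicit) argument: the corollary is stated without proof precisely because it is the theorem's share-size bound $S(i) \geq \frac{2^{k+1}-1}{2k+1}S(S)$ rewritten through the definition $\kappa = S(S)/\max_j S(j)$, together with the asymptotics $k = \Theta(\log_2 n)$. Your added care about existence of a realization (via the Gottesman/Smith constructions) and the $n=4$ boundary case are sensible fillings-in of details the paper leaves unstated.
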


Classically the techniques used to study of bounds on secret sharing schemes often rely on 
information theoretic inequalities of the Shannon entropy. Furthermore, there is a larger set of tools available
such as polymatroids that make it possible to derive general results. Unfortunately in the study of quantum secret sharing schemes, these tools are either difficult to apply or either inapplicable.  For this reason it is interesting 
that we have been able to use the von Neumann entropic inequalities to prove Theorem~\ref{th:shareSize}.

\subsection {Bounds for Special Cases:  $n=4,5$ }
Often the study of small instances can reveal interesting insights. For this reason
we now consider the access structure for $n=4$, to derive some slightly tighter bounds than the ones obtained in 
Theorem~\ref{th:shareSize}. 
 
\begin{theorem}
Any realization of the quantum minimal access structure $\Gamma^{(4)}$ 
or its purification must have some share of size $50\%$ larger than the size of secret.
\ben
\Gamma_{\min}^{4} =\{(1,2);(1,3); (2,3,4) \}\label{eq:gamma4}
\een 
\end{theorem}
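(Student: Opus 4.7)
The plan is to specialize the Csirmaz-style telescoping used in the proof of Theorem~\ref{th:shareSize} to $k=2$, $n=4$, and then to sharpen the resulting constant from $7/5$ to $3/2$ by keeping the contribution of the purifying share explicit, rather than lower-bounding $S(\{1,2,3,4\})$ by $2S(S)$ as is done in the general argument. Taking $A=\{1,2\}$, $B=\{3,4\}$ with $B_0=\emptyset$, $B_1=\{3\}$, $B_2=\{3,4\}$ and $A_0=\{1,2\}$, $A_1=\{1\}$, $A_2=\{2\}$, a direct check shows that $M_i=A_i\cup B_i$ for $i=0,1,2$ recovers exactly the three minimal authorized sets in~\eqref{eq:gamma4}, so the setup of Theorem~\ref{th:shareSize} applies verbatim.

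For each $i\in\{0,1\}$ I would apply Lemma~\ref{lm:subAddAuthSets} to the authorized pair $A\cup B_i$ and $A_{i+1}\cup B_{i+1}$ (whose intersection $A_{i+1}\cup B_i$ is unauthorized, by the same argument used in the proof of Theorem~\ref{th:shareSize}) and combine it with the submodularity inequality~\eqref{eq:subModularity} applied to $A_{i+1}\cup B_i$ and $B_{i+1}$. Adding the two resulting inequalities and telescoping yields
\[
S(\{1,2\})+S(\{3,4\})\;\geq\;S(\{1,2,3,4\})+4S(S).
\]
To control $S(\{3,4\})$ I would pass to a purified realization: either the Gottesman purification $\overline{\Gamma}^{(4)}$ with an added party~$5$, or a Stinespring dilation of the given $\Gamma^{(4)}$-realization with an auxiliary system $E$; in either case the marginals on parties~$1,\dots,4$ are unchanged. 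Purity of $R\cup\{1,\dots,4\}\cup E$ together with Lemma~\ref{lm:jointEntropy} applied to the authorized set $\{1,2,3,4\}$ gives $S(\{1,2,3,4\})=S(E)+S(S)$, while strong subadditivity~\eqref{eq:strSubAdd} applied to $R\cup\{1,2\}$ and $R\cup E$ with common part $R$ yields $S(\{3,4\})=S(R\cup\{1,2\}\cup E)\leq S(\{1,2\})+S(E)-S(S)$.

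Substituting both identities into the telescoped inequality, the $S(E)$ contributions cancel and I obtain $2S(\{1,2\})\geq 6S(S)$, so $S(\{1,2\})\geq 3S(S)$. Ordinary subadditivity $S(\{1,2\})\leq S(1)+S(2)$ then forces $\max\{S(1),S(2)\}\geq\tfrac{3}{2}S(S)$, and since parties~$1$ and $2$ appear in both $\Gamma^{(4)}$ and its purification, the bound applies to any realization of either structure. The main obstacle --- and the reason that simply invoking Theorem~\ref{th:shareSize} at $k=2$ delivers only $7/5$ --- is to arrange that the purifying share appears with matching coefficients on both sides of the combination so that it cancels exactly, instead of being absorbed into the overall maximum via the weaker estimate $S(E)\geq S(S)$.
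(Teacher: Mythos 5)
Your proof is correct, and its core is the same as the paper's: the paper's own argument for $n=4$ is precisely the $k=2$ instance of the telescoping in Theorem~\ref{th:shareSize} --- it uses the same two authorized pairs ($\{1,2\},\{1,3\}$ with unauthorized intersection $\{1\}$, and $\{1,2,3\},\{2,3,4\}$ with unauthorized intersection $\{2,3\}$) together with the same submodularity steps, and arrives at the identical central inequality $S(\{1,2\})+S(\{3,4\})\geq S(\{1,2,3,4\})+4S(S)$. Where you genuinely diverge is the endgame. The paper bounds $S(\{1,2,3,4\})=S(p)+S(S)\geq 2S(S)$, obtains $S(1)+S(2)+S(3)+S(4)\geq 6S(S)$ by subadditivity, and takes the average over all four shares. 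You instead keep $S(\{1,2,3,4\})=S(E)+S(S)$ exact and bound $S(\{3,4\})=S(R,\{1,2\},E)\leq S(\{1,2\})+S(E)-S(S)$, so that $S(E)$ cancels and you get $S(\{1,2\})\geq 3S(S)$, hence $\max\{S(1),S(2)\}\geq \tfrac{3}{2}S(S)$. Both finishes deliver the stated $50\%$ bound, but yours buys slightly more: it localizes the large share to parties $1$ and $2$ rather than merely asserting existence among all four, and it does not need the estimate $S(E)\geq S(S)$ for the purifying system (indeed your upper bound on $S(\{3,4\})$ holds by plain subadditivity $S(R,E)\leq S(R)+S(E)$ even without invoking $I(E:R)=0$, so that step is airtight). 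Your closing diagnosis --- that the $7/5$ of Theorem~\ref{th:shareSize} degrades to $3/2$ precisely because the general argument discards the purifying share's entropy via $S(E)\geq S(S)$ instead of cancelling it --- is accurate and is a nice observation that the paper does not make explicit.
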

\begin{proof}
The purification of $\Gamma_{\min}^{(4)}$ is given by 
$\overline{\Gamma}_{\min}^{(4)} =  \{ (1,2);(1,3); (2,3,4); (2,3,p); p(1,4,p) \}$.
First we observe that $\overline{\Gamma}_{\min}^{(4)}$ is  a realizable quantum access 
structure as any two authorized sets have a nonempty  intersection. 
Furthermore, it can be easily verified that it is also a self-dual access  structure. 
Consider now the sets $A_0=\{ 1,2,3\}$, $B_0=\{2,3,4\}$. Both these sets are authorized sets while $A_0\cap B_0 =\{ 2,3\}$ is not authorized.
Therefore by Lemma~\ref{lm:subAddAuthSets}, we have 
\be
S(1,2,3)+S(2,3,4)&\geq& S(1,2,3,4)+S(2,3)+2S(S).
\ee
Consider now the sets $A_1=\{2,3 \}$ and $B_1=\{3,4 \}$. By the subadditivity of the 
von Neumann entropy we obtain
\ben
S(2,3)+S(3,4)&\geq& S(2,3,4)+S(3).
\een
Adding the previous two inequalities we obtain
\ben
S(1,2,3)+S(3,4)&\geq& S(1,2,3,4)+S(3)+2S(S)
\een
This can be rewritten as 
\ben
S(1,2,3)-S(3)&\geq &S(1,2,3,4)-S(3,4)+2S(S) \label{eq:a0a1}
\een
Consider now the sets $A_2=\{1,2 \}$ and $B_2=\{1,3\}$ which are both in $\overline{\Gamma}_{\min}^{(4)}$, while $A_2\cap B_2=\{ 1\}$ is unauthorized.  Applying Lemma~\ref{lm:subAddAuthSets} we obtain 
\ben
S(1,2)+S(1,3)&\geq& S(1,2,3)+S(1)+2S(S).
\een
By the subadditivity inequality we have 
\ben
S(1)+S(3)&\geq&S(1,3).
\een
Adding the previous two equations we obtain
\ben
S(1,2)+S(3)&\geq& S(1,2,3)+2S(S),
\een
which can be rewritten as 
\ben
S(1,2)&\geq&S(1,2,3)-S(3)+2S(S) \label{eq:a1a2}.
\een
Now let us add the equations~\eqref{eq:a0a1},~and~\eqref{eq:a1a2} 
to obtain
\ben
S(1,2)+S(3,4)&\geq &S(1,2,3,4)+4S(S). 
\een
By Lemma~\ref{lm:jointEntropy}, $S(1,2,3,4)=S(p)+S(S)\geq 2S(S)$.
Further, applying the subadditivity inequality to $S(1,2) $ and $S(3,4)$ 
we can reduce the previous equation to 
\ben
S(1)+S(2)+S(3)+S(4) &\geq&6S(S)
\een

Finally,  the non-negativity of $S(\cdot)$ implies at least one of the entropies $S(i)$, $1\leq i\leq 4$
must be greater than their average $6S(S)/4$. Thus at least some share of $\Gamma^{(4)}$ 
and $\overline{\Gamma}^{(4)}$ must at least 
$50\%$ larger than the size of the secret. 
\end{proof}

Theorem~\ref{th:shareSize} predicts that $\overline{\Gamma}^{(4)}$ will have an information rate of $5/7\approx 0.714$, while 
the above result shows that it can be at most $2/3\approx 0.667$.

\section{Conclusion}
In this paper we investigated the sizes of shares for quantum secret sharing schemes. 
We showed that there exist access structures, on $n$ participants, for which the size of the share grows  as $O(n/\log_2 n)$. 
To the best of our knowledge, these bounds represent the strongest lower bounds on the
size of a share (equivalently upper bounds on the information rate). Some questions suggested by
these results are the tightness of these bounds and the schemes for realizing these access
structures with information rates close to the bounds. We will address these questions elsewhere. We hope that this work highlights the fact 
that there is a significant gap between the upper bounds and lower bounds for the information
rate of (perfect) quantum secret sharing schemes and motivates further  research in this direction.  

\section*{Acknowledgment}
This research is sponsored by CIFAR, MITACS and NSERC. I would like to thank 
Robert Raussendorf for his support and encouragement. 


\end{document}